\newcommand{\D}{\mathcal{D}}
\renewcommand{\H}{\mathcal{H}}
\newcommand{\N}{\mathbb{N}}
\newcommand{\R}{\mathbb{R}}
\newcommand{\C}{\mathbb{C}}
\newcommand{\unit}{\mathbf{1}}
\newcommand{\ra}{\rightarrow}
\newcommand{\U}{\operatorname{U}}
\newcommand{\tr}{\operatorname{tr }}
\newcommand{\rmd}{\operatorname{d\!}}
\newcommand{\rmi}{\operatorname{i}}
\newcommand{\rme}{\operatorname{e}}
\renewcommand{\dim}{\operatorname{dim }}
\newtheorem{theorem}{Theorem}
\theoremstyle{definition}\newtheorem{definition}[theorem]{Definition}
\theoremstyle{definition}\newtheorem{remark}[theorem]{Remark}
\theoremstyle{definition}\newtheorem{example}[theorem]{Example}
\theoremstyle{definition}
\theoremstyle{definition}
\numberwithin{theorem}{section}
\title[Convergence of dynamical decoupling with finite amplitude controls]{Stability and convergence of dynamical decoupling with finite amplitude controls}
\author{Daniel Burgarth}
\address[DB]{Center for Engineered Quantum Systems, Macquarie University, Sydney 2109 NSW, Australia}
\author{Paolo Facchi}
\address[PF]{Dipartimento di Fisica, Universit\`{a} di Bari, I-70126 Bari, Italy, and\\ INFN, Sezione di Bari, I-70126 Bari, Italy}
\author{Robin Hillier}
\address[RH]{Department of Mathematics and Statistics, Lancaster University, Lancaster LA1 4YF, UK}
\email{r.hillier@lancaster.ac.uk}
\date{24 May 2022. Revised version 4 November 2022.}
\begin{document}

\maketitle
\thispagestyle{empty}

\begin{abstract}
Dynamical decoupling is a key method to mitigate errors in a quantum mechanical system, and we studied it in a series of papers dealing in particular with the problems arising from unbounded Hamiltonians. The standard bangbang model of dynamical decoupling, which we also used in those papers, requires decoupling operations with infinite amplitude, which is strictly speaking unrealistic from a physical point of view. In this paper we look at decoupling operations of finite amplitude, discuss under what assumptions dynamical decoupling works with such finite amplitude operations, and show how the bangbang description arises as a limit, hence justifying it as a reasonable approximation.
\end{abstract}

\section{Introduction}

The control of quantum noise and the suppression of decoherence in a quantum mechanical system have been the focus of quantum information theory for a long time, cf.~\cite{LB13} for an overview. On the one side there are error-correcting methods that allow to correct for errors by encoding the information in subspaces and correcting for leakage by measurements and gates. On the other side there are error-mitigation methods that aim to reduce errors before they happen by sequences of gates. Dynamical decoupling is a key tool in this regard. First presented in 1998 in~\cite{VL98,VKL99}, the underlying idea goes back to NMR~\cite{HW68,H76}. It has been studied and used by both theoretical and experimental physicists~\cite{LB13}.

The key idea of dynamical decoupling is to apply certain correction operations to a given quantum system interacting with a surrounding bath or environment in an unknown and unwanted way. These decoupling operations act on the system as unitary operators, and usually they are applied in a cyclic pattern and at constant frequency. The general hope is that as this frequency tends to infinity, decoherence tends to zero. A formal proof not taking into account domain issues of unbounded Hamiltonians can already be found in~\cite{VKL99}. Intuitively speaking, the idea is that for short times the system drifts slowly into some fixed direction, and these correction operations quickly rotate this fixed direction. These rotated drifts cancel with each other to first order and the net movement after some time is close to zero. Finding precise mathematical conditions and making rigorous statements as to when this does actually work is an open question. A first important step in that direction was made by us in~\cite{ABFH}, where we provided some necessary and sufficient conditions in order for dynamical decoupling to work, and we also studied a number of specific models in more detail, see also \cite{BFH}.

In most investigations, dynamical decoupling is understood to be the so-called \emph{bangbang dynamical decoupling}~\cite{VKL99}, where the correction operations are carried out in the form of instantaneous pulses with fixed frequency. This means that the correction operations need to have infinite amplitude, which makes the description unrealistic from a physical point of view. It therefore makes sense to look at correction operation which are implemented by finite-amplitude pulses of certain shapes that last for a nonzero amount of time~\cite{bang,
kicknfix}. We will call this \emph{continuous dynamical decoupling} in contrast to the usual bangbang dynamical decoupling.  In the case of bounded Hamiltonians this has been studied before~\cite{LB13} from various points of view. In this paper, we would like to look at potentially unbounded Hamiltonians on a joint system $\H=\H_s\otimes\H_e$, where $\H_s$ is a finite-dimensional system of interest and $\H_e$ is a potentially infinite-dimensional environment.

The first question we deal with here is what happens when the relative pulse length (i.e., the actual pulse length divided by the time difference between two consecutive operations) tends to zero, and hence the amplitude tends to infinity. One might suspect that the time evolution of the system should in some sense converge to the time evolution under bangbang dynamical decoupling. In Section~\ref{sec:approx}, we prove that this is indeed the case. This means that bangbang dynamical decoupling is not some abstract unphysical framework but instead a reasonable and mathematically simple approximation of the true situation, which among other things justifies previous research on bangbang dynamical decoupling.

The second question we ask is under what assumptions dynamical decoupling with finite amplitude pulses works directly, without having to consider the limit of the relative pulse length tending to zero. It turns out that the assumptions one has to make are stronger than in the bangbang case in order to overcome two obstacles. The first obstacle is analytic in nature and relates to domains of the bath Hamiltonians; hence it may arise in the case of infinite-dimensional environments only. The second obstacle is rather algebraic in nature and poses a problem in the case of finite-dimensional environments as well. 

In Section~\ref{sec:contDD}, we present a finite-dimensional example where the second obstacle is present and where continuous dynamical decoupling does not work although bangbang dynamical decoupling would work. However, in light of our first result, in the limit where the relative pulse length becomes shorter and shorter, the overall time evolution of this example becomes that of bangbang dynamical decoupling, which in turn does work. In Theorem~\ref{th:contDD} we study sufficient conditions in order to overcome the analytic obstacle, and we determine the resulting time evolution. In the case of bounded Hamiltonians, we can also provide concrete estimates on the speed of convergence depending on the norm of the Hamiltonian and a few other parameters, see Theorem~\ref{th:ratebounded}. 

Finally, in Section~\ref{sec:Euler} we deal with assumptions to make in order to overcome the second of the above obstacles, which is a constraint on the order of the decoupling operations. For this part we ask that the set of decoupling operations are arranged as a Cayley graph and a full decoupling cycle should be represented as an Euler cycle in this Cayley graph. In this case we speak of \emph{Euler dynamical decoupling}, which by construction is a special case of continuous dynamical decoupling. Euler dynamical decoupling was first introduced in~\cite{VK03} but without dealing with the sophisticated analytic problems of unbounded Hamiltonians. Here we provide sufficient conditions in order for Euler dynamical decoupling to work, and we close with an interesting example where this is indeed the case.

To summarise, we can say that continuous dynamical decoupling, or at least its more restrictive form of Euler dynamical decoupling, does indeed work under some fairly strong assumptions, even in the context of unbounded Hamiltonians. However, the description of continuous dynamical decoupling is technically more involved and moreover often only the weaker assumptions of bangbang dynamical decoupling are met, and we can say that in that case bangbang dynamical decoupling is a reasonable and useful approximation of reality.

\section{Preliminaries}\label{sec:prelim}

Let us consider a finite-dimensional quantum system with Hilbert space $\H_s$, say of dimension $d$, coupled to an environment with infinite-dimensional separable Hilbert space  $\H_e$, with the total space $\H :=\H_s\otimes\H_e$. The total Hamiltonian is given by a (possibly unbounded) densely defined selfadjoint operator $(H,\D(H))$.
We consider dynamical decoupling of this composite system, see~\cite[Def.2.1]{ABFH}. Moreover, we frequently write $u$ instead of $u\otimes\unit$, for $u\in\U(\H_s)$.

\begin{definition}\label{def:dd}
A \emph{decoupling set for $\H_s$} is a finite set of unitary operators $V\subset \U(\H_s)$ such that
\[
\frac{1}{|V|} \sum_{v\in V} v x v^* = \frac{1}{\dim \H_s} \tr(x) \unit_{\H_s},\quad \text{for all } x\in B(\H_s).
\]
Let $N$ be a multiple of the cardinality $|V|$. A \emph{decoupling cycle} of length $N$ is a cycle $(v_1, v_2,\ldots,v_N)$  through $V$ such that each element of $V$ appears the same number of times.
\end{definition}

\begin{remark}\label{rem:dd-def}
Sometimes we consider ``reduced" decoupling sets, namely sets $V$ such that the above equation holds only for specific given $x$. The idea behind this is that often one knows more about the Hamiltonian that has to be decoupled, hence does not need to work in complete generality; moreover, having a smaller number of decoupling operations may simplify the physical implementation and make the procedure more practicable.
\end{remark}

Running through the cycle over a time period $[0,t]$ results in the time evolution
\begin{eqnarray*}	
F(t) &:=&  \rme^{-\rmi \frac{t}{N} H} \gamma_{N} \cdots\rme^{-\rmi \frac{t}{N} H} \gamma_2 \rme^{-\rmi \frac{t}{N} H} \gamma_1
\\
&=& v_N^* \bigl(\rme^{-\rmi \frac{t}{N} v_N H v_N^*}  \cdots\rme^{-\rmi \frac{t}{N} v_2 H v_2^*}  \rme^{-\rmi \frac{t}{N} v_1 H v_1^*} \bigr) v_N,
\end{eqnarray*}
where $\gamma_{k}:= v_{k}^* v_{k-1}$, for $k=1,\dots,N$, with $v_0 := v_N$. Write $\Gamma\subset \U(\H_s)$ for the set of all those $\gamma_k$.
Notice that there is a gauge freedom in the choice of the cycle. Indeed,  $(v_1, v_2, \dots, v_n)$ is equivalent to the cycle $(\tilde{v}_1, \tilde{v}_2, \dots, \tilde{v}_n)$ with $\tilde{v}_k = w^* v_k$ for an arbitrary unitary $w\in V$: they generate the same control sequence, $\gamma_{k}= v_{k}^* v_{k-1} = \tilde{v}_{k}^* \tilde{v}_{k-1}$ and if one is decoupling so is the other. Therefore, without loss of generality, in the following we will always fix the gauge $w=v_N$, i.e.\ we will assume that the last element of the cycle is the identity, $\tilde{v}_N=\unit$ (and drop the tilde).

 We repeat the cycle several times and hence write $v_{k+j N} := v_k$, for every $k\in\{1,\ldots,N\}$ and $j\in\N$. If the  cycle is applied $m$ times over a time interval $[0,t]$ with (standard) instantaneous decoupling operations, with the operation $\gamma_k=v_{k}^* v_{k-1}$ applied at time $\frac{(k-1)t}{mN}$, we have the total time evolution (in the canonical gauge)
\begin{equation}\label{eq:Ftm-bb}
\begin{aligned}
F\Big(\frac{t}{m}\Big)^m =& \rme^{-\rmi \frac{t}{m N} H} \gamma_{m N} \cdots\rme^{-\rmi \frac{t}{m N} H} \gamma_2 \rme^{-\rmi \frac{t}{m N} H} \gamma_1 
	\\
	=&	
	\bigl(\rme^{-\rmi \frac{t}{mN} v_{mN} H v_{mN}^*}  \cdots\rme^{-\rmi \frac{t}{mN} v_2 H v_2^*}  \rme^{-\rmi \frac{t}{mN} v_1 H v_1^*} \bigr). 
\end{aligned}
\end{equation}
We say that \emph{bangbang dynamical decoupling works} for this given decoupling set $V$ and decoupling cycle $(v_1,\ldots,v_N)$ if there is a selfadjoint $B$ on $\H_e$ such that
\[
F\Big(\frac{t}{m}\Big)^m \ra \unit\otimes \rme^{-\rmi t B},\quad m\ra\infty,
\]
strongly and uniformly for $t$ in compact intervals. For a slightly weaker notion of decoupling, see~\cite{BFFH}.

Since these instantaneous pulses require infinite amplitude, they are strictly speaking unphysical and one considers a finite amplitude analogue, namely consider a smooth path $\gamma_k(s)$ in $U(\H_s)$ such that 
\begin{equation}\label{eq:gammak}
\gamma_k(s) = \left\{
\begin{array}{l@{\: :\:}l}
\unit & s\leq 0 \\
v_{k}^*v_{k-1} & s\geq 1 .
\end{array}\right.
\end{equation}
Moreover, let $A_k = A_k^*\in B(\H_s)$ be such that $\gamma_k = v_{k}^*v_{k-1} = \rme^{-\rmi A_k}$, and consider the path 
\begin{equation}\label{eq:Ak}
A_k(s):= \rmi \gamma_k'(s) \gamma_k(s)^* \in B(\H_s),
\end{equation}
with $A_k(s) = A_k(s)^*$. Physically, this describes the shape and direction of the smoothed out dynamical decoupling pulse. Often, e.g.~in Fig.~\ref{fig:figure}, one might be interested in $A_k(s) = \varphi_k(s) A_k$ with $\varphi_k$ an
integrable function with support in $[0,1]$ and $\int_\R \varphi_k(s) \rmd s =1$ but we try to keep things as general as possible. Now for every fixed  $k$ and $\tau>0$, consider the time-dependent Schr\"odinger equation
\begin{equation}\label{eq:timeSchroedinger}
\frac{\rmd}{\rmd s} u_k(\tau;s) = -\rmi \Big(\frac{1}{\tau} A_k(s/\tau) + H\Big) u_k(\tau;s), \quad 
u_k(\tau;0)=\unit, \qquad s\in[0,\tau].
\end{equation}
For time-dependent unbounded Hamiltonians, it is in general not obvious that solutions to the Schr\"odinger equation exist. However in this simple case with $H$ time-independent and $A_k(s)$ bounded, we know from~\cite[Thm.5.3.1]{Pazy} or~\cite[Thm.1.2]{Kato85} that a strongly continuous solution,
which preserves $\D(H)$, denoted by the time-ordered exponential
\[
u_k(\tau;s) = \mathop{\overleftarrow{\exp}}\Bigl(-\rmi  \int_{0}^{s} \Bigl[\frac{1}{\tau} A_k(r/\tau) +  H\Bigr]\rmd r \Bigr), \qquad s\in[0,\tau],
\]
exists. In analogy to the previous case, running once through the complete continuous decoupling cycle over a time period $[0,t]$ results in the time evolution
\begin{equation}\label{eq:Ft}
F(t):= u_N(t/N;t/N)\cdots  u_2(t/N;t/N)\, u_1(t/N;t/N).
\end{equation}
In analogy to the previous definition, we say that \emph{continuous dynamical decoupling works} for this system with given decoupling set $V$, decoupling cycle $(v_1,\ldots,v_N)$, and pulse shapes $\gamma_k(s)$  if there is a selfadjoint $B$ on $\H_e$ such that
\[
F\Big(\frac{t}{m}\Big)^m \ra \unit\otimes \rme^{-\rmi t B},\quad m\ra\infty,
\]
strongly and uniformly for $t$ in compact intervals.

\section{Approximation of instantaneous singular decoupling pulses}\label{sec:approx}

This section deals with the question whether bangbang dynamical decoupling is a good approximative description of the actual physical continuous dynamical decoupling. We will not make any assumption as to whether continuous dynamical decoupling works but we will assume that bangbang dynamical decoupling works, for a given system and dynamical decoupling cycle.

We consider the time evolution unitary
\begin{equation}\label{eq:BB-evol}
\Big(  \rme^{-\rmi \frac{t}{mN} H} \gamma_{N} \cdots   \rme^{-\rmi \frac{t}{mN} H} \gamma_1\Big)^m
= \Big(\rme^{-\rmi\frac{t}{mN}H} \rme^{-\rmi A_N} \cdots \rme^{-\rmi\frac{t}{mN}H} \rme^{-\rmi A_1} \Big)^m
\end{equation}
under bangbang (instantaneous) dynamical decoupling, where we repeat a given decoupling cycle $(v_1,\ldots, v_N)$ $m$-times over a time period $[0,t]$, realised through pulses $\gamma_k=v_{k}^* v_{k-1} = \rme^{-\rmi A_k}$, with $k=1,\ldots,N$, as introduced above.
\begin{figure}
\includegraphics[width=0.78\textwidth]{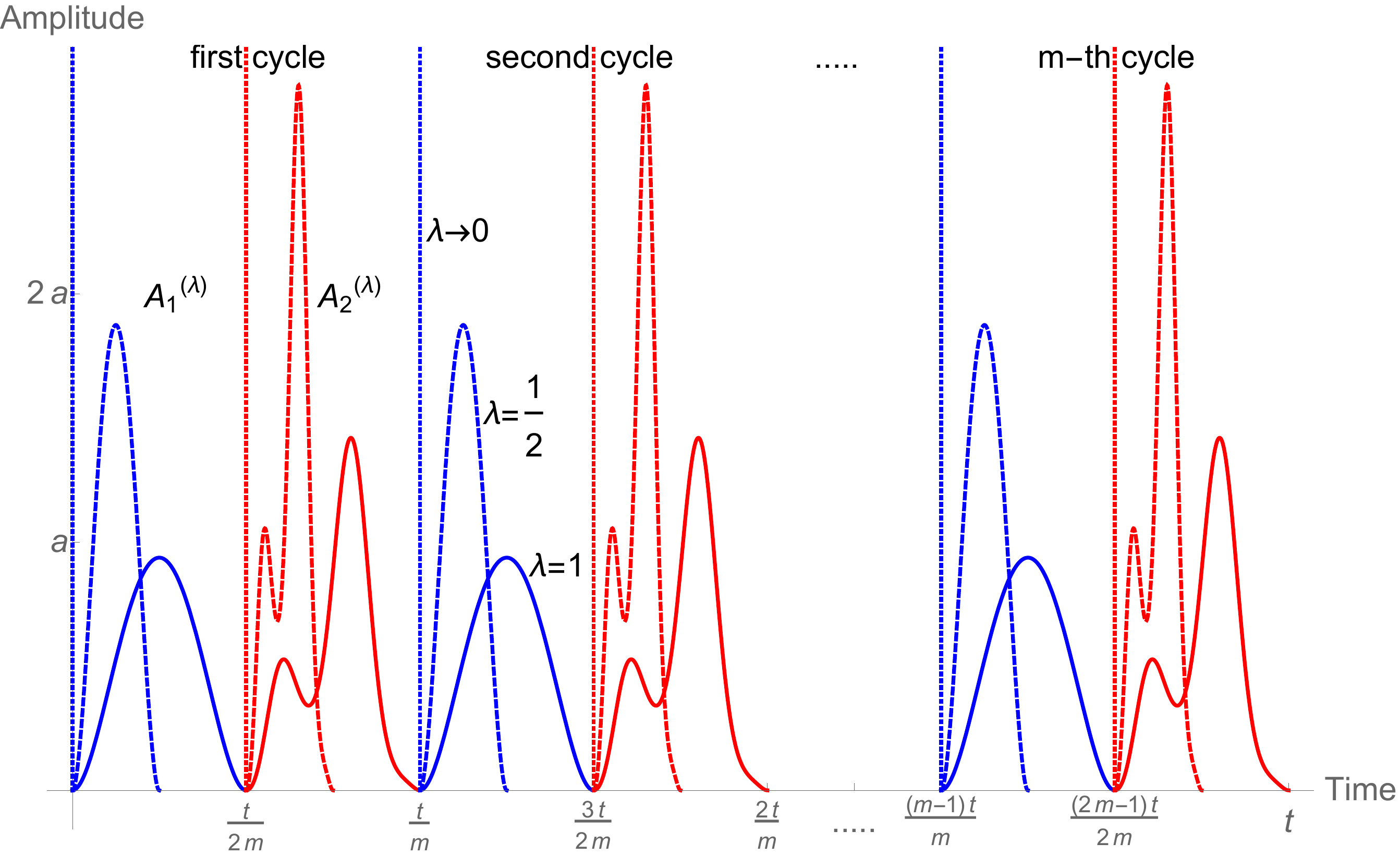}
\caption{An example of decoupling pulses for a cycle of length $N=2$, one element of the cycle being denoted by blue colour, the other one by red colour. For fixed overall time $t$, we repeat the cycle $m$ times. For each of the two elements of the decoupling cycle we choose different pulse shapes $A_k^{\lambda}(s)$, with $k=1,2$. We show the dependence on the scaling parameter $\lambda$, namely in addition to $\lambda=1$ (full line), we include the case $\lambda=\frac12$ (dashed line) and the limit $\lambda\to 0$ (singular dotted line). Each pulse has support on an interval of length $\frac{\lambda t}{mN}$ and the area under each pulse is independent of $\lambda$. As $\lambda\rightarrow 0$, the pulses become $\delta$ pulses with infinite amplitude, which corresponds to bangbang dynamical decoupling. Theorem~\ref{th:DD-convergence} considers this convergence, while Theorems \ref{th:contDD} and \ref{th:Euler} work with the case $\lambda>0$ corresponding to continuous dynamical decoupling.
\label{fig:figure}}
\end{figure}

Given the smooth function $A_k(s)$ from~\eqref{eq:Ak}, for every $\lambda\in(0,1]$, let 
\begin{equation}\label{eq:philambda}
A_k^{(\lambda)}(s) := \frac{1}{\lambda} A_k\Big(\frac{s}{\lambda}\Big)
\end{equation}
and $\gamma_{k,\lambda}(s):=\gamma_k(s/\lambda)$.
Then we consider the time-dependent Hamiltonian
\begin{equation}\label{eq:Adeltaalpha}
H_\lambda(s) = H + \sum_{k=1}^{N}  A_k^{\left(\frac{\lambda t}{N}\right)}\Bigl(s- (k-1) \frac{t}{N}\Bigr)  , \quad s\in[0,t],
\end{equation}
with constant domain $\D(H_\lambda(s))=\D(H)$. The $k$-th pulse  $A_{k}^{(\lambda t/N)}(s-(k-1) t/N)$ 
has finite width $\lambda\, t/N$ and is supported in $[(k-1) t/N, (k-1+\lambda) t/N]$, as shown in Fig.~\ref{fig:figure}.
We can see that the conditions in~\cite[Thm.5.3.1]{Pazy}, or alternatively the slightly simplified conditions in~\cite[Thm.1.2]{Kato85}, are fulfilled, individually for each $k$, with $Y=\D(H)$ and $\|\cdot\|_Y$ the graph norm of $H$, hence the time-dependent Schr\"odinger equation for $H_\lambda(s)$ has got a strongly continuous solution, and hence the corresponding analogue of~\eqref{eq:Ft} is 
\begin{equation}
	F_\lambda(t):= u_N(t/N,\lambda;t/N)\cdots  u_2(t/N,\lambda;t/N)\, u_1(t/N,\lambda;t/N),
\end{equation}
where
\[
	u_k(\tau,\lambda;s)= \mathop{\overleftarrow{\exp}}\Bigl(-\rmi  \int_{0}^{s} \Bigl[A_{k}^{(\lambda\tau)}(r) +  H\Bigr]\rmd r \Bigr), \qquad s\in[0,\tau].
\]
The time evolution after $m$ decoupling cycles is then $F_\lambda(t/m)^m$.

We have:

\begin{theorem}\label{th:DD-convergence}
Suppose that bangbang dynamical decoupling works for a given quantum system with decoupling cycle. Then dynamical decoupling with finite-amplitude pulses $A_{k}^{(\lambda)}$ works arbitrarily well by choosing $\lambda>0$ sufficiently small. More precisely, there is a selfadjoint $B$ on $\H_e$ such that
\[
\lim_{m\to\infty}\lim_{\lambda\to 0} F_\lambda(t/m)^m \xi = \unit\otimes \rme^{-\rmi t B} \xi,
\]
for all $\xi\in\H$ and uniformly for $t$ in compact intervals.
\end{theorem}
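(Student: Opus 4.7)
The plan is to exploit the iterated nature of the limit: first fix $m$ and send $\lambda\to 0$ to recover the bangbang evolution $F(t/m)^m$ of~\eqref{eq:BB-evol}, then invoke the bangbang decoupling hypothesis to send $m\to\infty$. Since $F_\lambda(t/m)^m$ and $F(t/m)^m$ are both products of $mN$ unitaries and strong-operator multiplication is jointly continuous on uniformly bounded sets, the inner limit reduces to the single-pulse convergence
\[
u_k(\tau,\lambda;\tau)\,\xi \;\longrightarrow\; \rme^{-\rmi\tau H}\rme^{-\rmi A_k}\,\xi,\quad \lambda\to 0,
\]
for each $k$ and each $\xi\in\H$, strongly and uniformly for $\tau=t/(mN)$ in compact subsets. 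Because $A_k^{(\lambda\tau)}$ is supported in $[0,\lambda\tau]$, one splits $u_k(\tau,\lambda;\tau)=\rme^{-\rmi(1-\lambda)\tau H}\,u_k(\tau,\lambda;\lambda\tau)$; the drift factor converges strongly to $\rme^{-\rmi\tau H}$ by strong continuity of the unitary group $\{\rme^{-\rmi rH}\}_{r\in\R}$, so the real work is to show $u_k(\tau,\lambda;\lambda\tau)\to\rme^{-\rmi A_k}$ strongly.

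The cleanest route past the unboundedness of $H$ is the interaction picture. Rescale $s=\lambda\tau r$ with $r\in[0,1]$ and set $V_\lambda(r):=\rme^{\rmi\lambda\tau r H}\,u_k(\tau,\lambda;\lambda\tau r)$. Using that $H$ commutes with $\rme^{\rmi\lambda\tau r H}$ on $\D(H)$, a direct computation gives (initially on $\D(H)$, then extended by density)
\[
\frac{\rmd}{\rmd r}V_\lambda(r)=-\rmi\,\rme^{\rmi\lambda\tau r H}A_k(r)\rme^{-\rmi\lambda\tau r H}\,V_\lambda(r),\quad V_\lambda(0)=\unit,
\]
so $V_\lambda$ is the time-ordered exponential of a \emph{bounded} time-dependent Hamiltonian, uniformly bounded in $\lambda$ by $\sup_r\|A_k(r)\|$. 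As $\lambda\to 0$ the conjugation tends to the identity strongly, hence the integrand converges strongly and pointwise in $r$ to $A_k(r)$; by Duhamel's formula and dominated convergence $V_\lambda(1)\to\gamma_k(1)=\rme^{-\rmi A_k}$ strongly. Combined with $\rme^{-\rmi\lambda\tau H}\to\unit$ strongly, this yields $u_k(\tau,\lambda;\lambda\tau)=\rme^{-\rmi\lambda\tau H}V_\lambda(1)\to\rme^{-\rmi A_k}$ strongly, with uniformity for $\tau$ in compacts.

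Putting these together, $\lim_{\lambda\to 0}F_\lambda(t/m)^m\,\xi=F(t/m)^m\,\xi$ strongly for each fixed $m$ and $\xi$, uniformly in $t$ on compacts. The bangbang hypothesis then supplies $\lim_{m\to\infty}F(t/m)^m\,\xi=\unit\otimes\rme^{-\rmi tB}\,\xi$ with the same uniformity, yielding the iterated limit in the statement. The main obstacle throughout is the unboundedness of $H$: one cannot naively bound $\|\lambda\tau H\|$ in the rescaled Schr\"odinger equation. The interaction-picture trick sidesteps this entirely by conjugating $H$ out of the generator, leaving only a manifestly bounded, strongly convergent driving term amenable to standard perturbative tools.
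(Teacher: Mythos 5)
Your argument is correct and follows essentially the same route as the paper: reduce to convergence of the single pulse $u_k(\tau,\lambda;\tau)\to\rme^{-\rmi\tau H}\gamma_k$, pass to the interaction picture so the propagator has a bounded generator $\rme^{\rmi\lambda\tau rH}A_k(r)\rme^{-\rmi\lambda\tau rH}$, and conclude via Duhamel plus dominated convergence that the time-ordered exponential converges strongly to $\gamma_k$ as $\lambda\to 0$, after which strong continuity of multiplication on norm-bounded sets and the bangbang hypothesis finish the job. The only cosmetic difference is that you first peel off the free-drift factor $\rme^{-\rmi(1-\lambda)\tau H}$ before entering the interaction picture on $[0,\lambda\tau]$, whereas the paper applies the interaction-picture change of variables to the whole interval $[0,\tau]$ in one step; these are equivalent manipulations.
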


Roughly speaking, the theorem tells us that, under reasonable assumptions, the bangbang model is an adequate mathematical approximation of a physical setup with finite amplitude decoupling pulses.

\begin{proof}
First we notice that, for every $\xi\in\H$,
\begin{equation}\label{eq:U-triangle1}
\begin{aligned}
\|F_\lambda(t/m)^m \xi - \unit\otimes \rme^{-\rmi t B} \xi \| \le& \|F_\lambda(t/m)^m \xi - F(t/m)^m\xi \|  + \|F(t/m)^m\xi  - \unit\otimes \rme^{-\rmi t B} \xi\|,
\end{aligned}
\end{equation}
where $F(t/m)^m$ has been defined in~\eqref{eq:Ftm-bb}. We choose $B$ and $m$ such that the second term on the RHS becomes sufficiently small, which is possible if the standard bangbang dynamical decoupling scheme works. Moreover, for this $m$, we will show below that we can then choose $\lambda>0$ small enough such that the first term on the RHS becomes sufficiently small, which proves the theorem.

Now rather than looking at the whole interval $[0,t]$ with $mN$ decoupling operations, we would like to look at one of the subintervals $[(k-1)\tau,k \tau]$, of width $\tau= t/(mN)$. The overall time evolution is a product of the time evolutions over these $mN$ subintervals, so if we can show strong convergence on these subintervals,
\[
u_k(\tau,\lambda;\tau)\,\xi \to \rme^{-\rmi \tau H} \gamma_k \, \xi, \qquad \lambda\downarrow 0,
\]
for all $\xi\in\H$, then by the strong continuity of multiplication it follows for the total time evolution. We get
\begin{eqnarray*}
u_k(\tau,\lambda;\tau) &=& \mathop{\overleftarrow{\exp}}\Bigl(-\rmi  \int_{0}^{\tau} \Bigl[A_{k}^{(\lambda\tau)}(s) +  H\Bigr]\rmd s \Bigr)
= \mathop{\overleftarrow{\exp}}\Bigl(-\rmi  \int_{0}^{1} \Bigl[A_{k}^{(\lambda)}(s) +  \tau H\Bigr]\rmd s \Bigr)
\\
&=& \rme^{-\rmi \tau H} \mathop{\overleftarrow{\exp}}\Bigl(-\rmi  \int_{0}^{1} \rme^{\rmi s \lambda\tau H} A_{k}(s) \rme^{-\rmi s \lambda\tau H}\rmd s \Bigr).
\end{eqnarray*}
Thus all we need to show is that
\[
\mathop{\overleftarrow{\exp}}\Bigl(-\rmi  \int_{0}^{1} \rme^{\rmi s \lambda\tau H} A_{k}(s) \rme^{-\rmi s \lambda\tau H}\rmd s \Bigr) \to \mathop{\overleftarrow{\exp}}\Bigl(-\rmi  \int_{0}^{1}  A_{k}(s) \rmd s \Bigr) = \gamma_k, 
\]
strongly, as $\lambda\to 0$.

Recall, e.g.~from~\cite[Eqn.(5.3)]{Kato85}, that if $U_j(t)= \mathop{\overleftarrow{\exp}}(-\rmi \int_{0}^{t} H_j(s) \rmd s)$, are the unitary propagators generated by two bounded selfadjoint time-dependent operators $H_j(t)$, for $j=1,2$, then 
$$
U_1(t)-U_2(t) = -\rmi \int_0^t U_1(t)U_1(s)^* \big(H_1(s)-H_2(s)\big) U_2(s)\rmd s,
$$
whence
$$
\bigl\|\big(U_1(t)-U_2(t)\big)\xi\bigr\| \leq \int_0^t \bigl\|  \big(H_1(s)-H_2(s)\big) U_2(s) \xi \bigr\|\rmd s.
$$
Applying this with
\[
H_1(s)=\rme^{\rmi s \lambda\tau H} A_{k}(s) \rme^{-\rmi s \lambda\tau H},\quad
H_2(s)=A_{k}(s),
\]
and using our above computations, we obtain
\begin{eqnarray*}
\bigl\|\big(u_k(\tau,\lambda;\tau)-\rme^{-\rmi \tau H} \gamma_k\big) \xi \bigr\|	
&=& \bigl\|\big(\rme^{\rmi \tau H} u_k(\tau,\lambda;\tau)- \gamma_k\big) \xi \bigr\| \\
&\leq& \int_{0}^{1} \bigl\| \big( \rme^{\rmi s \lambda\tau H} A_{k}(s) \rme^{-\rmi s \lambda\tau H}
- A_{k}(s)  \big) \gamma_k(s) \xi  \bigr\| \rmd s
\\
&=&  \int_{0}^{1} \bigl\| \bigl[  A_{k}(s) , \, \rme^{-\rmi s \lambda\tau H} \bigr]\, \gamma_k(s)  \xi  \bigr\| \rmd s \to 0,
\end{eqnarray*}
as $\lambda\downarrow 0$, by dominated convergence, using the strong continuity of $\lambda\mapsto\rme^{-\rmi s\lambda\tau H}$ and hence pointwise convergence of the integrand  together with its boundedness by $2\|A_k\|_\infty \|\xi\|$.

So far this was for one specific choice of $k$. But $F_\lambda(t/m)$ is actually an $mN$-fold product, where strong convergence holds for each of the $mN$ factors. Due to the strong continuity of multiplication of uniformly bounded factors,
 we see that the first term on the RHS of~\eqref{eq:U-triangle1} can be made arbitrarily small as $\lambda\downarrow 0$.

\end{proof}

\begin{remark}\label{rem:DD-convergence}
If $\D\subset\D(H)$ is a subspace which is invariant under all $\gamma_k(s)$ and $A_k(s)$ and such that 
\[
s\in[0,1]\mapsto \big\|HA_k(s)\gamma_k(s)\xi\big\|, \quad \big\| H\gamma_k(s)\xi \big\|
\]
are integrable for every $\xi\in\D$, the above proof also gives us a simple estimate on the speed of convergence in Theorem~\ref{th:DD-convergence} relative to a given vector. Namely, for every $\xi\in\D$, we have:
\begin{align*}
\bigl\|\big(u_k(\tau,\lambda;\tau)-\rme^{-\rmi \tau H} \gamma_k\big) \xi \bigr\|	
\leq&  \int_{0}^{1} \bigl\| \bigl[  A_{k}(s) , \, \rme^{-\rmi s \lambda\tau H} \bigr]\, \gamma_k(s)  \xi  \bigr\| \rmd s \to 0\\
=& \int_{0}^{1} \| A_{k}(s)\big\|\, \big\|\big(\rme^{-\rmi s \lambda\tau H} -\unit\big) \gamma_k(s)  \xi  \big\| \rmd s \\
& \quad + \int_{0}^{1} \bigl\| \big( \rme^{-\rmi s \lambda\tau H} -\unit \big) A_k(s)\gamma_k(s)\xi \bigr\| \rmd s\\
\leq& \lambda\tau \int_0^1  \Big( \big\| A_{k}(s)\big\|\,\big\| H\gamma_k(s)\xi \big\| + \big\|HA_k(s)\gamma_k(s)\xi\big\|\Big) \rmd s,
\end{align*}
which is linear in $\lambda\tau$. Taking the sum over $k$ removes $\tau$ and turns this into an upper bound proportional to $\lambda$, so the convergence to bangbang dynamical decoupling is of order $O(\lambda)$.
\end{remark}

\section{Continous decoupling}\label{sec:contDD}

One may ask the question whether continuous dynamical decoupling works automatically provided bangbang dynamical decoupling works. The following example shows that this is not the case.

\begin{example}\label{Counterexample}
Let us consider the simplest case of a qubit, with trivial environment $\H_e=\C$, so our Hilbert space is $\mathcal{H}=\mathcal{H}_s=\mathbb{C}^2.$ This means that the total Hamiltonian $H$ is bounded. Let us denote the Pauli matrices as $\{X,Y,Z\}.$ The standard decoupling set $V$ for which bangbang dynamical decoupling works is given by the finite group generated by the Pauli matrices. Due to cancellation of phases it suffices to focus on the decoupling cycle $(X,Y,Z,\unit)$ of length $N=4$. The bangbang evolution is given by 
\[
 \Big( \unit\rme^{-\rmi \frac{t}{4m}H} \unit Z \rme^{-\rmi \frac{t}{4m}H}ZY\rme^{-\rmi \frac{t}{4m}H}YX\rme^{-\rmi \frac{t}{4m}H}X \Big)^m,
\]
and this tends to $\unit$ as $m\to\infty$.
We can use the properties of Pauli matrices to simplify this expression as
\[
(-1)^m\Big( \rme^{-\rmi \frac{t}{4m}H}Z \rme^{-\rmi \frac{t}{4m}H} X \Big)^{2m}
=(-1)^m\Big( \rme^{-\rmi \frac{t}{4m}H}\rme^{-\rmi \frac{\pi}{2}Z} \rme^{-\rmi \frac{t}{4m}H}  \rme^{-\rmi \frac{\pi}{2}X} \Big)^{2m}
\]
Let us imagine instead that the experimentalist performs the sequence in a continuous fashion with rectangular pulses of full length $t/(4m)$, which means $\lambda=1$ in the notation of the previous section. The overall time evolution then becomes
\[
F_1\Big( \frac{t}{m}\Big) ^m = (-1)^m\Big( \rme^{-\rmi \frac{\pi}{2}Z -\rmi \frac{t}{4m}H}  \rme^{-\rmi \frac{\pi}{2}X -\rmi \frac{t}{4m}H}\Big)^{2m} .
\] 
For large $m$, we can split the exponentials as
\[
\rme^{-\rmi \frac{\pi}{2}\sigma -\rmi \frac{t}{4m}H} = \rme^{-\rmi \frac{\pi}{2} \sigma} \rme^{-\rmi \frac{t}{4m} H_\sigma} + \mathcal{O}\left(\frac{1}{m}\right),
\]
where $H_\sigma$ is the Zeno Hamiltonian~\cite{unity1} of $H$ with respect to the strong part $\sigma=X,Z$. The hope is that this recovers dynamical decoupling even with the pulses spread out. However, the error term might accumulate to something of $\mathcal{O}(1)$ under exponentiation, and we will show now that this is indeed sometimes the case.

To this end, consider the choice $H=X$, so that
\[
F_1\Big( \frac{t}{m}\Big) ^m = (-1)^m\Big( \rme^{-\rmi \frac{\pi}{2}Z -\rmi \frac{t}{4m}X}  \rme^{-\rmi \frac{\pi}{2}X -\rmi \frac{t}{4m}X}\Big)^{2m}.
\] 
In the large $m$ limit only terms up to order $1/m$ contribute, so we neglect all higher order terms. 

For the first term on the right-hand side, we have
\[
\rme^{-\rmi \frac{\pi}{2}Z -\rmi \frac{t}{4m}X}
=-\rmi Z- \rmi \frac{t}{4m}\int_0^1 \rme^{-\rmi (1-s) \frac{\pi}{2} Z} X  \rme^{-\rmi s \frac{\pi}{2} Z}\rmd s +\mathcal{O}\big(\frac{1}{m^2}\big)
= -\rmi Z -\rmi \frac{t}{2\pi m} X +\mathcal{O}\big(\frac{1}{m^2}\big)
\] 
while for the second term we get
\[
\rme^{-\rmi \frac{\pi}{2}X -\rmi \frac{t}{4m}X} = 
(-\rmi X) \Big( \unit -\rmi \frac{t}{4m} X + \mathcal{O}\big(\frac{1}{m^2}\big)\Big).
\]
Thus
\begin{align*}
\lim_{m\rightarrow\infty} F_1\Big( \frac{t}{m}\Big) ^m
=& \lim_{m\rightarrow \infty}(-1)^m \Big( \big( -\rmi Z - \rmi \frac{t}{2\pi m} X \big)(-\rmi X) \big( \unit - \rmi \frac{t}{4m} X\big)\Big)^{2m}\\
=& \lim_{m\to\infty} (-1)^m \Big( -\rmi Y- \frac{t}{2\pi m} \unit + \rmi \frac{t}{4m}Z \Big)^{2m}
\end{align*}
Finally, squaring out the right-hand side and using anticommutativity of Pauli matrices we obtain
\[
\lim_{m\to\infty} F_1\Big( \frac{t}{m}\Big) ^m
=\lim_{m\to\infty}(-1)^m\left (-\unit+\rmi\frac{t}{\pi m}Y\right)^{m}=\rme^{-\rmi \frac{t}{\pi}Y},
\] 
which is different from a multiple of $\unit$, in general. So indeed, continuous dynamical decoupling does not work for the standard decoupling set.
\end{example}

The next theorem provides some sufficient criteria as to whether $\lim_{m\to\infty} F_\lambda\big( \frac{t}{m}\big)^m$ exists in the case of continuous pulses and what it is equal to in general:

\begin{theorem}\label{th:contDD}
Consider a (not necessarily decoupling) cycle $(v_1, v_2, \dots, \unit)$ through $U(\H_s)$. Assume that
\begin{itemize}
\item[(i)] there is a dense subspace $\D\subset\D(H)$ which is invariant under $u_k(t/N,\lambda;t/N)$ and $\gamma_k(s)$, for all $k,s,t$, and such that
$
s\in[0,1]\mapsto \|H\gamma_k(s)\xi\|
$
is integrable, for all $\xi\in\D$;
\item[(ii)] the operator
\[
\frac{1-\lambda}{N} \sum_{k=1}^N v_k H v_k^* +  \frac{\lambda}{N} \sum_{k=1}^N v_{k-1} \int_{0}^1 \gamma_k(s)^* H\gamma_k(s) \rmd s \; v_{k-1}^*
\]
is essentially selfadjoint on $\D$, and we denote its closure by $H_\lambda$.
\end{itemize}
Then
\[
F_\lambda \Big(\frac{t}{m}\Big) ^m\xi \to 
\rme^{-\rmi t H_\lambda} 
\xi, \quad m\to\infty,
\]
for all $\xi\in\H$ and uniformly for all $t$ in compact intervals.
\end{theorem}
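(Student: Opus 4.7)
The plan is to establish a first-order expansion $F_\lambda(\tau) = \unit - \rmi\tau H_\lambda + o(\tau)$ strongly on $\D$ as $\tau\downarrow 0$, and then invoke Chernoff's product formula. First, I would put the full-cycle operator into a telescoped form. Using the gauge $v_N = \unit$ together with the identities $\gamma_{k-1}\cdots\gamma_1 = v_{k-1}^{*}v_0 = v_{k-1}^{*}$ and $\gamma_N\cdots\gamma_1 = \unit$, repeated commutation of each $\gamma_k$ to the left of $F_\lambda(t/m) = u_N(\sigma,\lambda;\sigma)\cdots u_1(\sigma,\lambda;\sigma)$ yields, with $\sigma := t/(mN)$,
\[
F_\lambda(t/m) \;=\; \prod_{k=N}^{1}\, v_{k-1}\, W_k(\sigma)\, v_{k-1}^{*},\qquad W_k(\sigma):=\gamma_k^{*}\,u_k(\sigma,\lambda;\sigma),
\]
where each factor tends strongly to $\unit$ as $\sigma\downarrow 0$.

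Next, I would analyse a single $W_k(\sigma)$. Splitting $u_k(\sigma,\lambda;\sigma) = \rme^{-\rmi(1-\lambda)\sigma H}\,V_k(\sigma)$ into the post-pulse free evolution and the pulse-window evolution $V_k(\sigma) = \mathop{\overleftarrow{\exp}}\bigl(-\rmi\int_0^1[A_k(s)+\lambda\sigma H]\,\rmd s\bigr)$, and then passing to the interaction picture with respect to $A_k$ via the factorisation $V_k(\sigma;s) = \gamma_k(s)\,I_k(\sigma;s)$, one obtains
\[
W_k(\sigma) \;=\; \rme^{-\rmi(1-\lambda)\sigma\,\gamma_k^{*}H\gamma_k}\;
\mathop{\overleftarrow{\exp}}\Bigl(-\rmi\lambda\sigma\int_0^1\gamma_k(s)^{*}H\gamma_k(s)\,\rmd s\Bigr).
\]
For $\xi\in\D$, invariance of $\D$ under all $\gamma_k(s)$ together with the integrability of $\|H\gamma_k(s)\xi\|$ from~(i) enables a Duhamel expansion of both factors on $\D$, with the remainder controlled via the norm bound $\|(V_k(\sigma;r)-\gamma_k(r))\xi\| \le \lambda\sigma\int_0^r\|H\gamma_k(s)\xi\|\,\rmd s$ (obtained exactly as in the proof of Theorem~\ref{th:DD-convergence}) and a dominated convergence argument. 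Combined with the identity $v_{k-1}\gamma_k^{*}H\gamma_k\, v_{k-1}^{*} = v_k H v_k^{*}$, this yields
\[
v_{k-1}W_k(\sigma)v_{k-1}^{*}\xi \;=\; \xi - \rmi\sigma\Bigl[(1-\lambda)v_k H v_k^{*} + \lambda\, v_{k-1}\!\int_0^1\gamma_k(s)^{*}H\gamma_k(s)\,\rmd s\, v_{k-1}^{*}\Bigr]\xi + o(\sigma),
\]
which is exactly the $k$-th summand in the definition of $NH_\lambda$.

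Multiplying the $N$ factors and using $N\sigma = t/m$ then gives $F_\lambda(t/m)\xi = \xi - \rmi(t/m)H_\lambda\xi + o(1/m)$ strongly on $\D$. Since $F_\lambda$ is unitary and strongly continuous in its argument and, by~(ii), $H_\lambda|_\D$ is essentially selfadjoint so that $\D$ is a core for its closure $H_\lambda$, Chernoff's product formula now delivers $F_\lambda(t/m)^m\xi \to \rme^{-\rmi t H_\lambda}\xi$ strongly, uniformly for $t$ in compact intervals. I expect the main technical obstacle to be the rigorous Duhamel expansion of $W_k(\sigma)$ when $H$ is unbounded: ensuring that the remainder is genuinely $o(\sigma)$ rather than merely $O(\sigma)$ requires carefully combining the invariance of $\D$ under the pulse propagators $\gamma_k(s)$ with the integrability bound in~(i), since $H$ itself is not a priori preserved by the full perturbed propagators $V_k(\sigma;r)$.
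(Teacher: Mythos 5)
Your proposal is correct and follows essentially the same route as the paper: telescoping the cycle, factorizing each $u_k$ via the interaction picture with respect to the pulse generator $A_k$ to arrive at $\rme^{-\rmi(1-\lambda)\sigma\gamma_k^*H\gamma_k}\,\mathop{\overleftarrow{\exp}}\bigl(-\rmi\lambda\sigma\int_0^1\gamma_k(s)^*H\gamma_k(s)\,\rmd s\bigr)$, extracting the strong derivative at $\sigma=0$ by Duhamel plus dominated convergence on the fixed vectors $v_{k-1}^*\xi$, and invoking Chernoff. Your rewriting $F_\lambda(t/m)=\prod_{k=N}^1 v_{k-1}W_k(\sigma)v_{k-1}^*$ with $W_k=\gamma_k^*u_k$ is an algebraic repackaging of the paper's identity $F-\unit=\sum_k\bigl(\prod_{j>k}u_j\bigr)(u_k-\gamma_k)v_{k-1}^*$ (one checks $X_N\cdots X_{k+1}(X_k-\unit)=u_N\cdots u_{k+1}(u_k-\gamma_k)v_{k-1}^*$), so the two are literally the same telescoping with the prefactor written differently; what the paper's form makes explicit, and your sketch leaves implicit, is that the strong-limit argument is applied to the $\sigma$-independent vectors $v_{k-1}^*\xi$, with the $\sigma$-dependent unitary prefactor handled by its uniform boundedness and strong convergence to $\unit$. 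The only point you should state rather than gesture at is the passage from $O(\sigma)$ to $o(\sigma)$: it comes precisely from the integral identity $U_\sigma(1,0)-\unit=-\rmi\sigma\int_0^1 U_\sigma(1,r)\gamma_k(r)^*H\gamma_k(r)\,\rmd r$ applied on $\D$, where $\|(U_\sigma(1,r)-\unit)\gamma_k(r)^*H\gamma_k(r)\xi\|$ is dominated by $2\|H\gamma_k(r)\xi\|$ (integrable by (i)) and tends to $0$ pointwise, exactly as you anticipate.
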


\begin{remark}\label{rem:dd-TK}
Under the assumptions of Theorem \ref{th:contDD} for all $\lambda\in[0,1]$, it follows from the Trotter-Kato theorem \cite[Th.3.4.8]{EN} that
\[
\rme^{-\rmi t H_\lambda}\xi \to \rme^{-\rmi t H_0}\xi, \quad \lambda\to 0,
\]
for all $\xi\in\H$ and uniformly for all $t$ in compact intervals. Comparing with Theorem \ref{th:DD-convergence} we thus have
\[
\lim_{m\to\infty} \lim_{\lambda\to 0} F_\lambda \Big(\frac{t}{m}\Big) ^m\xi = \lim_{\lambda\to 0}\lim_{m\to\infty} F_\lambda \Big(\frac{t}{m}\Big) ^m\xi,
\]
for all $\xi\in\H$ and all $t\in\R$, i.e., we can swap the limits of letting the number of repetitions of the decoupling cycle go to infinity and that of the relative pulse length going to $0$.
\end{remark}

\begin{proof}
Using (i), we see that $F(t)$ as defined in~\eqref{eq:Ft} leaves $\D$ invariant and hence, by setting $u_k:=u_k(t/N,\lambda;t/N)$, we obtain:
\begin{align*}
\frac{F_\lambda(t)-\unit}{t} \xi =& 
\frac{\prod_{k=1}^N u_k(t/N,\lambda;t/N)-\unit}{t}\xi\\
=& \frac{\prod_{j=1}^N u_j  - \prod_{l=1}^N \gamma_{l}}{t}\xi 
\\
=& \frac{\prod_{j=1}^N  u_j -(\prod_{j=2}^N u_j) \gamma_1}{t}\xi + \frac{(\prod_{j=2}^N u_j) \gamma_1  -(\prod_{j=3}^N  u_j) \gamma_2\gamma_1}{t}\xi 
\\
&+ \ldots +
\frac{u_N (\prod_{l=1}^{N-1} \gamma_{l}) - \prod_{l=1}^N \gamma_{l}}{t}\xi\\
=& \sum_{k=1}^N \Bigl(\prod_{j=k+1}^N u_j\Bigr) \frac{u_k-\gamma_{k}}{t} \Bigl(\prod_{l=1}^{k-1} \gamma_{l}\Bigr) \xi \\
=& \sum_{k=1}^N \Bigl(\prod_{j=k+1}^N u_j(t/N,\lambda;t/N)\Bigr) \frac{u_k(t/N,\lambda;t/N)-\gamma_{k}}{t} \, v^*_{k-1} 
\xi \\
\end{align*}
is well-defined on $\D$. 

We have
\[
u_k(t/N,\lambda;t/N) = \rme^{-\rmi  \frac{t}{N}(1-\lambda)H}\gamma_k\mathop{\overleftarrow{\exp}}\Big(-\rmi  \frac{\lambda t}{N} \int_{0}^{1} \gamma_k(s)^* H \gamma_k(s) \rmd s\Big),
\]
and the first term on the right-hand side is clearly differentiable with respect to $t$. In order to study the second one, let us make the following consideration: if $U_\alpha(t,s)$ solves the differential equation
\[
\frac{\rmd}{\rmd s} U_\alpha(t,s) = \rmi U_\alpha(t,s) \alpha H(s)
\]
with $\alpha\in\R$ then it also solves the following integral equation
\[
U_\alpha(t,s)-\unit = -\rmi\alpha \int_s^t U_\alpha(t,r)H(r) \rmd r.
\]
Thus
\begin{align*}
\Big\| \frac{U_\alpha(1,0)-\unit}{\alpha} \xi + & \rmi \int_0^1 H(r)\xi \rmd r \Big\|
= \Big\| -\rmi \int_0^1 \big(U_\alpha(1,r)-\unit \big) H(r)\xi \rmd r \Big\| \\
\leq& \int_0^1  \Big\| \big(U_\alpha(1,r)-\unit\big)  H(r)\xi\Big\| \rmd r
\end{align*}
for suitable $\xi$. Now the integrand is dominated by $2\|H(r)\xi\|$, and we require $\xi$ to be in all $\D(H(r))$ and such that this is integrable on $[0,1]$; furthermore, the integrand converges to $0$ as $\alpha\to 0$, pointwise for every $r\in[0,1]$. Thus by Lebesgue's dominated convergence theorem we get that the right-hand side converges to $0$ as $\alpha\to 0$, in other words
\[
\frac{U_\alpha(1,0)-\unit}{\alpha} \xi \to - \rmi \int_0^1 H(r)\xi \rmd r.
\]
We apply this to our setting with $\alpha=t$ and
\[
H(r)= \frac{\lambda}{N}\gamma_k(r)^*H\gamma_k(r)
\]
to obtain the differentiability of $U_t(1,0)=\mathop{\overleftarrow{\exp}}\Big(-\rmi  \frac{\lambda t}{N} \int_{0}^{1} \gamma_k(s)^* H \gamma_k(s) \rmd s\Big)$ w.r.t.~$t$, and
hence by the product rule
\[
\frac{u_k(t/N,\lambda;t/N)-\gamma_{k}}{t}\xi \to
 \Big(-\rmi  \frac{\lambda}{N} \gamma_{k} \int_{0}^{1} \gamma_{k}(s)^* H \gamma_{k}(s) \rmd s-\rmi  \frac{1-\lambda}{N} H \gamma_{k}\Big)\xi,
\]
as $t\to 0$, for all $\xi\in\D$ as in assumption (i). Hence, we obtain
\[
\lim_{t\ra 0} \frac{F_\lambda(t)-\unit}{t} \xi = \sum_{k=1}^N 
v_k  \Big( -\rmi  \frac{\lambda}{N} \gamma_{k} \int_{0}^{1} \gamma_{k}(s)^* H \gamma_{k}(s) \rmd s-\rmi  \frac{1-\lambda}{N} H \gamma_{k}\Big) v_{k-1}^* 
\xi,
\]
for all $\xi\in\D$.

By assumption (ii),
\begin{equation}\label{eq:HlambdaDD}
\begin{aligned}
\sum_{k=1}^N v_k  \Big( \frac{\lambda}{N} & \gamma_{k} \int_{0}^{1} \gamma_{k}(s)^* H \gamma_{k}(s) \rmd s + \frac{1-\lambda}{N} H \gamma_{k}\Big) v_{k-1}^*\\
=& \frac{\lambda}{N}\sum_{k=1}^N   v_{k-1} \Big( \int_{0}^{1} \gamma_k(s)^* H \gamma_k(s) \rmd s\Big) v_{k-1}^* +  \frac{1-\lambda}{N}\sum_{k=1}^N v_k  H v_k^*
\end{aligned}
\end{equation}
is essentially selfadjoint on $\D$ with closure $H_\lambda$. We can apply Chernoff's theorem~\cite[Th.11.1.2]{Che} to obtain that
\[
F_\lambda\Big(\frac{t}{m}\Big)^m \xi \ra 
\rme^{-\rmi t H_\lambda} 
\xi, \quad m\ra\infty,
\]
uniformly for all $t$ in compact intervals and $\xi\in\H$.
\end{proof}
\begin{remark}
In the special case where $\lambda=0$ and $(v_1,\ldots,v_N)$ is a decoupling cycle, we are in the situation of bangbang dynamical decoupling and we see that the first sum on the right-hand side in~\eqref{eq:HlambdaDD} vanishes and we obtain $H_0=\unit \otimes B$ for some selfadjoint operator $B$ on $\H_e$, hence dynamical decoupling works. For general $\lambda$ this is not the case, as Example~\ref{Counterexample} showed, where $H_1=\frac{1}{\pi}Y\not=\unit\otimes B$.
\end{remark}

In the special case of a bounded Hamiltonian, we can provide estimates on the speed of convergence in Theorem~\ref{th:contDD}:

\begin{theorem}\label{th:ratebounded}
If $H$ is bounded we have
\begin{equation}\label{eq:ratebounded1}
\Big\| F_\lambda\Big(\frac{t}{m}\Big)^m - \rme^{-\rmi t H_\lambda}\Big\| \leq \frac{2t}{m} \| H\| (1+2t \|H\|)
\end{equation}
and
\begin{equation}\label{eq:ratebounded2}
\Big\| F_\lambda\Big(\frac{t}{m}\Big)^m - \rme^{-\rmi t H_0}\Big\| \leq \frac{2t}{m} \| H\| (1+2t \|H\|) + \lambda t \|H_1-H_0\|,
\end{equation}
with $H_\lambda$ as in Theorem~\ref{th:contDD}.
\end{theorem}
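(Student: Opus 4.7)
The plan for (\ref{eq:ratebounded1}) is a quantitative Chernoff-type estimate, carried out by combining a single-cycle Taylor expansion of $F_\lambda(s)$ with a unitarity-preserving telescoping identity. Setting $F:=F_\lambda(t/m)$ and $U:=\rme^{-\rmi(t/m)H_\lambda}$, both of which are unitary, the standard identity
\[
F^m - U^m = \sum_{k=0}^{m-1}F^{m-1-k}(F-U)U^k
\]
gives $\|F^m-U^m\|\leq m\|F-U\|$. Hence it suffices to establish a suitable bound on $\|F_\lambda(s)-\rme^{-\rmi sH_\lambda}\|$ for the short time $s=t/m$, which then, after multiplication by $m$, will produce (\ref{eq:ratebounded1}).

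To obtain such a single-cycle bound I would expand $F_\lambda(s)$ exactly as in the proof of Theorem~\ref{th:contDD}: write $u_k(\tau,\lambda;\tau) = \rme^{-\rmi(1-\lambda)\tau H}\gamma_k V_k(\tau)$ with $V_k(\tau) = \mathop{\overleftarrow{\exp}}(-\rmi\lambda\tau\int_0^1 \gamma_k(r)^*H\gamma_k(r)\rmd r)$, expand the ordinary exponential by Taylor and the time-ordered one by its Dyson series (both with explicit remainders, using $\|\gamma_k(r)^*H\gamma_k(r)\|\leq \|H\|$), and take the product $u_N\cdots u_1$. Using $\prod_{j>k}\gamma_j = v_k$ and $\prod_{j<k}\gamma_j = v_{k-1}^*$ as in the proof of Theorem~\ref{th:contDD}, the zeroth order of the product is $\unit$, the first-order term is exactly $-\rmi s H_\lambda$, and the second- and higher-order terms collect into a remainder of order $s^2\|H\|^2(1+s\|H\|)$ with explicit constants. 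Comparing with the standard bound $\|\rme^{-\rmi s H_\lambda} - \unit + \rmi s H_\lambda\|\leq \frac{1}{2}s^2\|H_\lambda\|^2\rme^{s\|H_\lambda\|}$ and using $\|H_\lambda\|\leq\|H\|$ (convex combination of conjugates of $H$), then substituting $s=t/m$ and multiplying by $m$ produces (\ref{eq:ratebounded1}) after absorbing constants into the prefactor $(1+2t\|H\|)$.

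For (\ref{eq:ratebounded2}) I would apply the triangle inequality
\[
\|F_\lambda(t/m)^m - \rme^{-\rmi tH_0}\|\leq \|F_\lambda(t/m)^m - \rme^{-\rmi tH_\lambda}\| + \|\rme^{-\rmi tH_\lambda} - \rme^{-\rmi tH_0}\|.
\]
The first term is bounded by (\ref{eq:ratebounded1}). For the second, the explicit formula for $H_\lambda$ in Theorem~\ref{th:contDD}(ii) is affine in $\lambda$, so $H_\lambda - H_0 = \lambda(H_1-H_0)$; the usual Duhamel identity $\rme^{-\rmi tA} - \rme^{-\rmi tB} = -\rmi\int_0^t \rme^{-\rmi(t-r)A}(A-B)\rme^{-\rmi rB}\rmd r$ for bounded selfadjoint $A,B$ then gives $\|\rme^{-\rmi tH_\lambda}-\rme^{-\rmi tH_0}\|\leq t\|H_\lambda-H_0\| = \lambda t\|H_1-H_0\|$, yielding the extra term in (\ref{eq:ratebounded2}).

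The main technical difficulty is the bookkeeping in the single-cycle expansion: the $O(s^2)$ remainder collects contributions from three separate sources, namely Taylor remainders of $\rme^{-\rmi(1-\lambda)\tau H}$, Dyson remainders of $V_k(\tau)$, and cross-products of first-order terms across distinct factors $u_k$ in the product $u_N\cdots u_1$. Each of these must be bounded tightly in terms of $s$ and $\|H\|$, and their sum combined carefully, in order to recover the specific prefactor stated in (\ref{eq:ratebounded1}); once that is in place, the telescoping step and the Duhamel step are essentially routine.
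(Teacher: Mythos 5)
Your proof of \eqref{eq:ratebounded2} from \eqref{eq:ratebounded1} via the triangle inequality, the affinity $H_\lambda=\lambda H_1+(1-\lambda)H_0$, and a Duhamel bound $\|\rme^{-\rmi tH_\lambda}-\rme^{-\rmi tH_0}\|\leq \lambda t\|H_1-H_0\|$ is exactly what the paper does, and is fine.

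For \eqref{eq:ratebounded1}, however, you take a genuinely different route from the paper, and as sketched it has a gap. The paper never telescopes. It rewrites the $m$-cycle evolution as a single time-ordered exponential
$F_\lambda(t/m)^m = \mathop{\overleftarrow{\exp}}\bigl(-\rmi\int_0^t U_\lambda^C(r)^*HU_\lambda^C(r)\,\rmd r\bigr)$,
compares it to $\rme^{-\rmi tH_\lambda}$ with the ``action-norm'' propagator bound of \cite{RWA},
$\|U_a-U_b\|_\infty\leq\|S_{ab}\|_\infty(1+\|H_a\|_1+\|H_b\|_1)$,
and then exploits the crucial structural fact that the integral action $S_{ab}(s)=\int_0^s\bigl(U_\lambda^C(r)^*HU_\lambda^C(r)-H_\lambda\bigr)\rmd r$ resets to zero at the end of every cycle, so that $\|S_{ab}\|_{L^\infty[0,t]}\leq\frac{2t}{m}\|H\|$ while $\|H_a\|_{L^1[0,t]}=t\|H\|$. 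It is this global application of the bound, not a per-cycle estimate, that delivers the polynomial prefactor $\frac{2t}{m}\|H\|(1+2t\|H\|)$ uniformly in $m$.

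Your telescoping-plus-Taylor route does not reproduce this. Two concrete problems. First, your claimed single-cycle remainder ``of order $s^2\|H\|^2(1+s\|H\|)$'' is not what the Dyson/Taylor expansions give: the tail of either series is $\rme^{s\|H\|}-1-s\|H\|$, which is $\tfrac12 s^2\|H\|^2\rme^{s\|H\|}$, not a polynomial. After telescoping you therefore obtain something like $\frac{t^2\|H\|^2}{m}\rme^{t\|H\|/m}$, and the exponential factor cannot simply be ``absorbed'' into the prefactor $(1+2t\|H\|)$; for $m=1$ and $t\|H\|$ moderately large this is strictly larger than the paper's right-hand side, so your intermediate bound does not imply the stated inequality without an extra case distinction (observing that the stated bound is vacuous whenever $t\|H\|\geq m$ and otherwise $\rme^{t\|H\|/m}\leq\rme$). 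Second, and more fundamentally, telescoping pays a factor of $m$ up front, so it only closes if the one-cycle error is genuinely $O(s^2)$; a Duhamel-type or per-cycle action bound, which gives $O(s)$, would leave you with an $O(t)$ error that does not decay in $m$. The reason the paper escapes this is precisely that it does not telescope. So the key idea you are missing is the ring bound applied over the full interval $[0,t]$ with the action-reset observation; your Chernoff-quantitative route can be pushed through but needs both the corrected exponential remainder and the additional vacuity argument, neither of which is in your sketch.
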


\begin{proof}
We start with some preparation: let us write
\[
H_\lambda^C(s) = \sum_{k=1}^{mN}  A_{k}^{\left(\frac{\lambda t}{mN}\right)}\Bigl(s-(k-1) \frac{t}{mN}\Bigr)  , \quad s\in[0,t]
\]
for the part of $H$ in~\eqref{eq:Adeltaalpha} that describes the control operations, and
\[
U_\lambda^C(s) = \mathop{\overleftarrow{\exp}} \Big(-\rmi \int_0^s H_\lambda^C(r)\rmd r \Big), \quad s\in[0,t]
\]
for the corresponding unitary time evolution. Then
\begin{align*}
\int_{(k-1)t/mN}^{kt/mN} U_{\lambda}^C(s)^* HU_{\lambda}^C(s) \rmd s 
=& \frac{t}{mN} \int_{0}^1 v_{k-1}\gamma_{k,\lambda}(s)^* H \gamma_{k,\lambda}(s)v_{k-1}^* \rmd s 
\\
=& \frac{t}{mN} \int_0^{\lambda} v_{k-1}\gamma_{k,\lambda}(s)^* H \gamma_{k,\lambda}(s)v_{k-1}^* \rmd s  
+ \frac{t}{mN} \int_{\lambda}^{1} v_{k-1} \gamma_k^* H \gamma_k v_{k-1}^* \rmd s
\\
=& \lambda\frac{t}{mN} \int_{0}^1 v_{k-1}\gamma_{k}(s)^* H \gamma_{k}(s)v_{k-1}^* \rmd s
+ (1-\lambda)\frac{t}{mN} v_k H v_k^* . 
\end{align*}	
Over one complete cycle, we therefore obtain
\[
\int_0^{t/m} U_{\lambda}^C(s)^* HU_{\lambda}^C(s) \rmd s
= \frac{t}{m}\big( \lambda H_1 + (1-\lambda) H_0\big),
\]
where
\[
H_0=\frac{1}{N} \sum_{k=1}^N v_k H v_k^*, \qquad
H_1=\frac{1}{N} \sum_{k=1}^N v_{k-1} \int_{0}^1 \gamma_k(s)^* H\gamma_k(s) \rmd s \; v_{k-1}^*
\]
as in Theorem~\ref{th:contDD}, and
\[
F_\lambda\Big(\frac{t}{m}\Big)^m 
= \mathop{\overleftarrow{\exp}} \Big(-\rmi \int_0^{t/m} U_\lambda^C(r)^* H U_\lambda^C(r)\rmd r \Big)^m 
= \mathop{\overleftarrow{\exp}} \Big(-\rmi \int_0^{t} U_\lambda^C(r)^* H U_\lambda^C(r)\rmd r \Big).
\]

We will make use of the bound of the ring~\cite{RWA}:
\[
\| U_a - U_b \|_{\infty} \leq \|S_{ab}\|_{\infty}(1+ \|H_a\|_1 + \|H_b\|_1),
\]
where $U_j(s)$ is the unitary propagator generated by a bounded integrable Hamiltonian $H_j(s)$, $\|\cdot\|_{\infty}$ and $\|\cdot\|_{1}$ are the norms in $L^\infty[0,t]$ and $L^1[0,t]$, respectively, and
\[
S_{ab}(s) = \int_0^s \big(H_a(r)-H_b(r)\big) \rmd r
\]
is the integral action.

In our case, we are interested in
\[
\Big\| F_\lambda\Big(\frac{t}{m}\Big)^m - \rme^{-\rmi t H_\lambda}\Big\|
\]
so let
\[
U_a(t) = F_\lambda\Big(\frac{t}{m}\Big)^m , \qquad U_b(t) = \rme^{-\rmi t ( \lambda H_1 + (1-\lambda) H_0)},
\]
and
\[
H_a(s) = U^C_\lambda(s)^* H U^C_\lambda(s), \qquad H_b(s) = H_\lambda = \lambda H_1+(1-\lambda)H_0.
\]
Thus, if $s$ is somewhere in the $(\ell+1)$-th cycle, i.e., $\ell t/m \leq s \leq (\ell+1)t/m$ where $0\leq\ell\leq m-1$ is an integer, we have
\begin{align*}
\int_0^s H_a(r) \rmd r =& \int_0^s U^C_\lambda(r)^* H U^C_\lambda(r) \rmd r \\
=& \int_0^{\ell t/m} U^C_\lambda(r)^* H U^C_\lambda(r) \rmd r  + \int_{\ell t/m}^s U^C_\lambda(r)^* H U^C_\lambda(r) \rmd r \\
=& \frac{\ell t}{m} \big( \lambda H_1 + (1-\lambda) H_0\big) + \int_{\ell t/m}^s U^C_\lambda(r)^* H U^C_\lambda(r) \rmd r
\end{align*}
and obviously
\[
\int_0^s H_b(r)\rmd s = s \big( \lambda H_1 + (1-\lambda) H_0\big).
\]
Combining the two, we get
\[
S_{ab}(s) = \Big(\frac{\ell t}{m} - s \Big) \big( \lambda H_1 + (1-\lambda) H_0\big)
+ \int_{\ell t/m}^s U^C_\lambda(r)^* H U^C_\lambda(r) \rmd r. 
\]
Since $|s-\ell t/m| \leq t/m$ by construction, we get
\[
\|S_{ab} \|_\infty \leq \frac{t}{m} \|\lambda H_1 + (1-\lambda) H_0 \| + \frac{t}{m}\|H\| \leq \frac{2t}{m}\|H\|.
\]
Moreover,
\[
\|H_a\|_1 = t \| H\|, \qquad  \|H_b\|_1 \leq t \| H\|,
\]
so
\[
\Big\| F_\lambda\Big(\frac{t}{m}\Big)^m - \rme^{-\rmi t \big(\lambda H_1+(1-\lambda) H_0\big)}\Big\| \leq \frac{2t}{m} \| H\| (1+2t \|H\|),
\]
and by the triangle inequality we have
\[
\Big\| F_\lambda\Big(\frac{t}{m}\Big)^m - \rme^{-\rmi t H_0}\Big\| \leq \frac{2t}{m} \| H\| (1+2t \|H\|) + \lambda t \|H_1-H_0\|.
\]
\end{proof}

\begin{remark}
Theorem~\ref{th:ratebounded} tells us how well dynamical decoupling works for given finite values of $m$ and $\lambda$; it works perfectly in the limit $m\to\infty$, $\lambda\to 0$. For a generic decoupling cycle, $H_1\not= H_0$, as shown in Example~\ref{Counterexample}, where $H_1=Y/\pi$, and dynamical decoupling does not work for $\lambda\not= 0$, or, in general, for pulses of width $O(t/(mN))$. However, scaling $\lambda$ together with $m$, e.g.~with $\lambda = O(1/\log m)$, makes it work as can be seen from~\eqref{eq:ratebounded2}. In order to have dynamical decoupling with the optimal convergence rate $O(t/m)$, one should take $\lambda = O(1/m)$, that is a  pulse width of order $O(t/(m^2 N))$. As we shall show in Section~\ref{sec:Euler}, it is possible to achieve convergence even in the case where $\lambda$ remains constant provided we choose the decoupling cycle as a so-called ``Euler cycle", as $H_1=H_0$ in that special case.
\end{remark}

\begin{remark}
The above results hold for an arbitrary cycle $(v_1, v_2, \dots, \unit)$  of unitaries (not necessarily decoupling). While our paper focuses on decoupling, applications to average Hamiltonian theory~\cite{H76} are immediate. 
\end{remark}

\section{Euler dynamical decoupling}\label{sec:Euler}

Suppose $V$ is a projective representation of a group and let $\Gamma\subset V$ be a set of generators of $V$. Recall that the Cayley graph of $(V,\Gamma)$ has got vertices $V$ and at each vertex outgoing edges labelled by $\Gamma$. This means every vertex has also got $|\Gamma|$ incoming edges, and altogether there are $N:= |V| |\Gamma|$ edges in the graph. Let $(v_1,v_2,\ldots,v_N)$ be an Euler cycle of decoupling operations, i.e., a cycle in the Cayley graph of $(V,\Gamma)$ that travels through each edge precisely once. The $k$-th edge is $\gamma_k:=v_{k}^* v_{k-1}\in\Gamma$ and $v_{0} = v_N=\unit$. 
Now for an Euler cycle, each $\gamma_k$ does in fact only depend on the edge and not the corresponding vertex in the Cayley graph; the same is true for its generator $A^{(k)}$, and hence we may also write $A^{(\gamma)}=A^{(k)}$ and $u_\gamma(t;s)=u_k(t;s)$ if $\gamma=\gamma_k$. We use the notation of continuous dynamical decoupling introduced in Section~\ref{sec:prelim}.

We say that, for given decoupling set $V$, Euler cycle, pulse shapes and relative pulse width $\lambda\in(0,1]$, \emph{Euler continuous dynamical decoupling works} if there is a selfadjoint $B$ on $\H_e$ such that
\[
F\Big(\frac{t}{m}\Big)^m\xi \ra \unit\otimes \rme^{- \rmi t B}\xi,\quad m\ra\infty,
\]
for all $\xi\in\H$ and uniformly for all $t$ in compact intervals.
See~\cite{VK03} for the origin of this and applications.

\begin{example}\label{ex:revisited}
To see how Euler cycles can change the situation, let us revisit Example \ref{Counterexample}, for which we saw that continuous dynamical decoupling does not work. There the decoupling set was $V=\{\unit,X,Y,Z\}$ and $H=X$ and $\lambda=1$; a natural choice of an Euler cycle would be $(Z,Y,X,\unit,X,Y,Z,\unit)$. Going through the discussion of Example \ref{Counterexample}, we obtain the overall time evolution
\begin{align*}
F_1\Big( \frac{t}{m}\Big) ^m =& \Big( 
\rme^{-\rmi \frac{\pi}{2}Z -\rmi \frac{t}{8m}H}  
\rme^{-\rmi \frac{\pi}{2}X -\rmi \frac{t}{8m}H}
\rme^{-\rmi \frac{\pi}{2}Z -\rmi \frac{t}{8m}H}  
\rme^{-\rmi \frac{\pi}{2}X -\rmi \frac{t}{8m}H}\\
& \quad \cdot\rme^{-\rmi \frac{\pi}{2}X -\rmi \frac{t}{8m}H}
\rme^{-\rmi \frac{\pi}{2}Z -\rmi \frac{t}{8m}H}  
\rme^{-\rmi \frac{\pi}{2}X -\rmi \frac{t}{8m}H}
\rme^{-\rmi \frac{\pi}{2}Z -\rmi \frac{t}{8m}H}  
\Big)^{m}\\
=& \Big(
\big((-\rmi Y - \frac{t}{2\pi m}\unit + \rmi \frac{t}{4m}Z + O(\frac{1}{m^2})\big)^2
\big((\rmi Y - \frac{t}{2\pi m}\unit + \rmi \frac{t}{4m}Z + O(\frac{1}{m^2})\big)^2
\Big)^m\\
=& \Big( \unit + O(\frac{1}{m^2})
\Big)^m \to \unit, \qquad m\to\infty ,
\end{align*}
so continuous dynamical decoupling with the above Euler cycle works in this case.
\end{example}

We generalise this in the following

\begin{theorem}\label{th:Euler}
Under the assumptions of Theorem~\ref{th:contDD} and if the given decoupling cycle is an Euler cycle, Euler continuous dynamical decoupling works.
\end{theorem}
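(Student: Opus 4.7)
The plan is to invoke Theorem~\ref{th:contDD} directly: its hypotheses hold by assumption, so we already know
\[
F_\lambda\Big(\frac{t}{m}\Big)^m \xi \to \rme^{-\rmi t H_\lambda} \xi, \qquad m \to \infty,
\]
strongly and uniformly on compact $t$-intervals, with
\[
H_\lambda = \frac{1-\lambda}{N}\sum_{k=1}^N v_k H v_k^* + \frac{\lambda}{N}\sum_{k=1}^N v_{k-1}\int_0^1 \gamma_k(s)^* H \gamma_k(s)\,\rmd s\; v_{k-1}^*.
\]
Everything then reduces to showing that, for an Euler cycle, this $H_\lambda$ factors as $\unit\otimes B$ for some selfadjoint $B$ on $\H_e$.

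The key combinatorial fact is that in the Cayley graph of $(V,\Gamma)$ each vertex has exactly $|\Gamma|$ outgoing edges, one per generator, giving $N = |V||\Gamma|$ directed labelled edges; an Euler cycle traverses each of these once, so the pairs $(v_{k-1},\gamma_k)_{k=1}^N$ enumerate $V\times\Gamma$ exactly once. In particular, $v_{k-1}$ (and therefore $v_k = \gamma_k v_{k-1}$) visits each element of $V$ precisely $|\Gamma|$ times. The first sum in $H_\lambda$ therefore collapses to
\[
\frac{1-\lambda}{|V|}\sum_{v\in V} v H v^*,
\]
which by the decoupling property of $V$ (applied slotwise on $\H_s\otimes\H_e$) has the form $\unit\otimes B_0$. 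For the second sum, regrouping by generator,
\[
\frac{\lambda}{N}\sum_{k=1}^N v_{k-1}\Big(\int_0^1 \gamma_k(s)^* H \gamma_k(s)\,\rmd s\Big) v_{k-1}^* = \frac{\lambda}{|V||\Gamma|}\sum_{\gamma\in\Gamma}\sum_{v\in V} v X_\gamma v^*,
\]
where $X_\gamma := \int_0^1 \gamma(s)^* H \gamma(s)\,\rmd s$; applying the decoupling identity once more to the inner sum for each fixed $\gamma$ gives $\unit\otimes \tilde B_\gamma$. Summing, $H_\lambda = \unit\otimes B$ with $B := B_0 + \frac{\lambda}{|\Gamma|}\sum_{\gamma} \tilde B_\gamma$, and selfadjointness of $B$ is inherited from that of $H_\lambda$ (guaranteed by hypothesis (ii) of Theorem~\ref{th:contDD}) via the elementary observation that $\unit\otimes B$ is essentially selfadjoint on a product-type core iff $B$ is.

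The main obstacle is making the use of the decoupling identity rigorous when $H$ is unbounded: the identity $\frac{1}{|V|}\sum_v v x v^* = \frac{\tr x}{\dim\H_s}\unit_s$ is stated for $x\in B(\H_s)$, whereas we are applying it with $H$ and with the operator-valued averages $X_\gamma$ in place of $x$. One handles this on the invariant dense domain $\D$ supplied by hypothesis (i), on which all of the conjugates $v H v^*$ and all of the $X_\gamma$ are densely defined and symmetric; expanding in a basis of $B(\H_s)$ with operator-valued coefficients on $\H_e$ (or, equivalently, noting that conjugation by $v\otimes\unit$ commutes with the partial-trace structure), the decoupling identity pushes through and yields the $\unit\otimes(\cdot)$ form on $\D$. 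Since closures agree, the conclusion $F_\lambda(t/m)^m\xi \to (\unit\otimes \rme^{-\rmi t B})\xi$ is immediate from Theorem~\ref{th:contDD}, which is exactly the statement that Euler continuous dynamical decoupling works.
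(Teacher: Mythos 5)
Your proof takes essentially the same route as the paper: invoke Theorem~\ref{th:contDD} to reduce everything to identifying $H_\lambda$, then use the Euler-cycle combinatorics (each vertex visited $|\Gamma|$ times, each pair $(v_{k-1},\gamma_k)$ occurring exactly once) to rewrite $H_\lambda$ as a double sum over $V\times\Gamma$, and finally apply the decoupling property of $V$ to conclude $H_\lambda=\unit\otimes B$. The paper packages the last step by citing \cite[Thm.\,3.1]{ABFH}, which handles the passage from the decoupling identity (stated only for bounded $x\in B(\H_s)$) to the unbounded operator $H$ rigorously, whereas your final paragraph on ``expanding in a basis of $B(\H_s)$ with operator-valued coefficients'' is the one place where you are somewhat sketchy; you correctly identify this as the obstacle but do not fully close it. Two minor slips that do not affect the argument: with the paper's convention $\gamma_k=v_k^*v_{k-1}$ the relation is $v_{k-1}=v_k\gamma_k$, not $v_k=\gamma_k v_{k-1}$; and the paper writes the regrouped expression as a single sum over $V\times\Gamma$ rather than splitting into two sums, but your split is algebraically equivalent.
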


\begin{proof}
It follows from the property of the Euler cycle that
\begin{align*}
\sum_{k=1}^N v_k  & \Big(\frac{\lambda}{N} \int_{0}^{1} \gamma_{k+1}(s)^* H \gamma_{k+1}(s) \rmd s + \frac{1-\lambda}{N} H\Big)v_k^*\\
=& \sum_{v\in V}\sum_{\gamma\in\Gamma} v \Big( \frac{\lambda}{N} \int_{0}^{1} \gamma(s)^* H \gamma(s) \rmd s + \frac{1-\lambda}{N} H\Big)v^*\\
=& \frac{1}{N} \sum_{v\in V} v \sum_{\gamma\in\Gamma} \Big(\lambda \int_{0}^{1} \gamma(s)^* H \gamma(s) \rmd s + \frac{1-\lambda}{N} H\Big)v^*,
\end{align*}
which we know from Theorem~\ref{th:contDD} to be essentially selfadjoint on $\D$. Here the first equality follows from the particular structure of the Euler cycle. It then follows from~\cite[Thm.3.1]{ABFH} and the decoupling property of $V$ in Definition~\ref{def:dd} that the closure of the right-hand side is of the form $H_\lambda=\unit\otimes B$, and that
 \[
F_\lambda\Big(\frac{t}{m}\Big)^m\xi \ra \rme^{-\rmi t H_\lambda}\xi,\quad m\ra\infty,
\]
uniformly for $t$ in compact intervals and all for $\xi\in\H$.
\end{proof}

\begin{remark}\label{rem:Eulerratebounded}
We see that Euler dynamical decoupling works independently of the choice of $\lambda$.

Moreover, in the case of bounded $H$, one can show that $H_1=H_0$ and hence $H_\lambda=H_0$, independently of $\lambda$. Thus Theorem~\ref{th:ratebounded} yields
\[
\Big\| F_\lambda\Big(\frac{t}{m}\Big)^m - \rme^{-\rmi t H_0}\Big\| \leq \frac{2t}{m} \| H\| (1+2t \|H\|).
\]

In order to prove $H_1=H_0$, we notice that by construction \[
H_0=\frac{1}{|V|}\sum_{v\in V} vHv^* =\unit_{\H_s}\otimes \frac{1}{\dim\H_s}\tr_{\H_s} H,
\]
where $\tr_{\H_s} H$ denotes the partial trace of $H$ with respect to $\H_s$. We then calculate as follows:
\begin{align*}
H_1 =& \frac{1}{N}\sum_{k=1}^N  v_{k-1} \int_{0}^1 \gamma_k(s)^*H\gamma_k(s) \rmd s \; v_{k-1}^* = \frac{1}{|V||\Gamma|}\sum_{v\in V} \sum_{\gamma\in\Gamma} v \int_{0}^1 \gamma(s)^*H\gamma(s) \rmd s \; v^* 
\\
=& \frac{1}{|\Gamma|} \sum_{\gamma\in\Gamma} \frac{1}{|V|}\sum_{v\in V}  v \int_{0}^1 \gamma(s)^*H\gamma(s) \rmd s \; v^* 
= \frac{1}{|\Gamma|} \sum_{\gamma\in\Gamma} \unit_{\H_s} \otimes \frac{1}{\dim\H_s}\tr_{\H_s} \Big(\int_{0}^1 \gamma(s)^*H\gamma(s) \rmd s\Big)\\
=&\frac{1}{|\Gamma|} \sum_{\gamma\in\Gamma} \unit_{\H_s} \otimes \frac{1}{\dim\H_s}\tr_{\H_s} H
= H_0,
\end{align*}
where the last equality follows from the fact that $\gamma(s)$ is a unitary acting only on $\H_s$ and hence the partial trace $\tr_{\H_s}$ is invariant under conjugation by $\gamma(s)$.

If $H$ is unbounded but $\D(H)$ factorises, namely it is of the form $\H_s\otimes\D_e$, we can still use the Schmidt decomposition to get a finite sum
\[
H=\sum_{k} H_{s,k}\otimes H_{e,k},
\]
which is essentially selfadjoint on $\H_s\otimes\D_e$. In that case we can use the partial traces again and the same argument goes through.
\end{remark}

\begin{example}\label{th:finiteEuler}
If $H$ factorizes as $H=H_s\otimes H_e$ and is selfadjoint on $\D(H)=\H_s\otimes\D(H_e)$ then, for any choice of decoupling set, generator set, Euler cycle and pulse shape, Euler continuous dynamical decoupling works: It can easily be seen that the conditions in Theorem~\ref{th:contDD} are fulfilled with $\D=\D(H)$ and $H_\lambda= \tr(H_s)\unit_{\H_s}\otimes H_e$.
\end{example}

Example~\ref{th:finiteEuler} covers a variety of models but many other models do not fall into this class, yet Theorem~\ref{th:Euler} can be applied, such as in the following:

\begin{example}\label{ex:deeppocket}
We consider the ``deep-pocket model" developed in~\cite{FGMS}, namely we let 
\[
\H'=\C^2\otimes L^2(\R_-)
\]
and $H'=Z\otimes \rmi\frac{\rmd}{\rmd x}$ on the domain
\[
\D(H') = \{ \phi\in\C^2\otimes H^1(\R_-): \; \phi(0)=(X\otimes\unit)\phi(0)\}.
\]
Physically, this describes a spin-1/2 particle moving along the negative half-axis and being reflected at $0$ at the price of a spin flip. A suitable (canonical and reduced, see Remark \ref{rem:dd-def}) decoupling set consists of $V'=\{X\otimes\unit, \unit\otimes\unit\}$.

It is evident that the domain $\D(H')$ of the model does not factorize as a tensor product, so Example~\ref{th:finiteEuler} does not apply here. However, $\D(H')$ is invariant under $V'$ and one realizes that $H'+ XH'X= 0$.  Let us define
\[
F(t)= \rme^{-\rmi \frac{t}{2} H} \rme^{-\rmi \frac{t}{2} XHX}, \quad t\in\R.
\]
Then
\[
\frac{F(t)-\unit}{t}\ra \frac{\rmi}{2} (H + X H X) =0,
\]
which is essentially selfadjoint on $\D(H)$, so~\cite[Thm.3.1]{ABFH} implies bangbang dynamical decoupling works.

In order to treat Euler continuous dynamical decoupling, it is useful to pass to a unitarily equivalent version of this model, cf.~\cite{FGMS}: namely, let
\[
\H= L^2(\R)
\]
and $H= \rmi\frac{\rmd}{\rmd x}$ on the domain $\D(H)=H^1(\R)$, which is selfadjoint. The corresponding decoupling set is $V=\{X_0, \unit\}$, where
\[
(X_0 \xi)(x) = \xi(-x), \quad x\in\R.
\]
Let $\D= \D(H)$. We see that $\D$ is invariant under $V$ and that
\begin{equation}\label{eq:X0Hn}
H^n X_0 = (-1)^n X_0 H^n.
\end{equation}

We use $\Gamma=\{\gamma=X_0\}$ and $V=\{\unit,X_0\}$, 
and the continuous dynamical decoupling operation is modelled by the function 
\[
\gamma(s) = \rme^{-\rmi \varphi(s) \frac{\pi}{2} (X_0-\unit)}
= \rme^{\rmi\varphi(s)}\cos\big(\varphi(s)\frac{\pi}{2}\big) \unit 
- \rmi \rme^{\rmi\varphi(s)}\sin\big(\varphi(s)\frac{\pi}{2}\big) X_0 ,
\qquad s\in [0,1],
\]
where $\varphi$ is a fixed continuous function such that $\varphi(0)=0$ and $\varphi(1)=1$. We immediately see that $\gamma(1)=X_0$, that $\gamma(s)$ and leaves $\D$ invariant, and that $\|H\gamma(s)\xi\|\leq 2\|H\xi\|$ as follows from~\eqref{eq:X0Hn}, and hence assumption (i) of Theorem~\ref{th:contDD} is fulfilled. 

In order to verify (ii), note that
\begin{align*}
(\gamma(s)^* H \gamma(s)) \xi(x) =& 
\rmi \cos^2\big( \varphi(s) \frac{\pi}{2}\big) \xi'(x) - 2 \sin\big( \varphi(s) \frac{\pi}{2}\big)\cos\big( \varphi(s) \frac{\pi}{2}\big) \xi'(-x)  - \rmi\sin^2\big( \varphi(s) \frac{\pi}{2}\big) \xi'(x)\\
=& \rmi \cos( \varphi(s) \pi) \xi'(x) - \sin( \varphi(s) \pi)\xi'(-x)\\
=& \cos( \varphi(s) \pi) (H\xi) (x) +\rmi \sin( \varphi(s) \pi)(X_0H \xi)(x)
\end{align*}
and hence
\begin{align*}
\int_{0}^1 \gamma(s)^* H \gamma(s) \xi \rmd s 
=& c_\unit H\xi + c_{X_0} X_0H\xi,
\end{align*}
with certain $c_\unit,c_{X_0}\in\C$. Finally, for all $\xi\in\D$, we have
\begin{align*}
\sum_{v\in V}\sum_{\gamma\in\Gamma} v \Big( \gamma \int_{0}^1 \gamma(s)^* H \gamma(s)\rmd s \Big) v^* \xi 
=& \sum_{v\in V} v \Big(c_\unit X_0 H + c_{X_0} H \Big)  v^*\xi \\
=& \Big(c_\unit X_0 H + c_{X_0} H + c_\unit H X_0 + c_{X_0} X_0 H X_0 \Big)  v^*\xi = 0,
\end{align*}
where the last equality follows from~\eqref{eq:X0Hn} again, so
\[
\sum_{v\in V}\sum_{\gamma\in\Gamma} v \Big( \lambda \gamma \int_{0}^1 \gamma(s)^* H \gamma(s)\rmd s + (1-\lambda) H \Big) v^* =0
\]
is essentially selfadjoint on $\D$, proving (ii).
\end{example}

\section*{Acknowledgments}

DB acknowledges funding by the Australian Research Council (project
numbers FT190100106, DP210101367, CE170100009).
PF is partially supported by the Italian National Group of Mathematical Physics (GNFM-INdAM), by Istituto Nazionale di Fisica Nucleare (INFN) through the project ``QUANTUM'', and by Regione Puglia and QuantERA ERA-NET Cofund in Quantum Technologies (GA No. 731473), project PACE-IN.

\section*{Final statements}

On behalf of all authors, the corresponding author states that there is no conflict of interest. 

Data sharing is not applicable to this article as no new data were created or analyzed in this study.

\end{document}